\Crefname{figure}{Fig.}{Figs.}
\Crefname{observation}{Observation}{Observations}
\renewcommand{\orcidID}[1]{\href{https://orcid.org/#1}{\includegraphics[scale=.03]{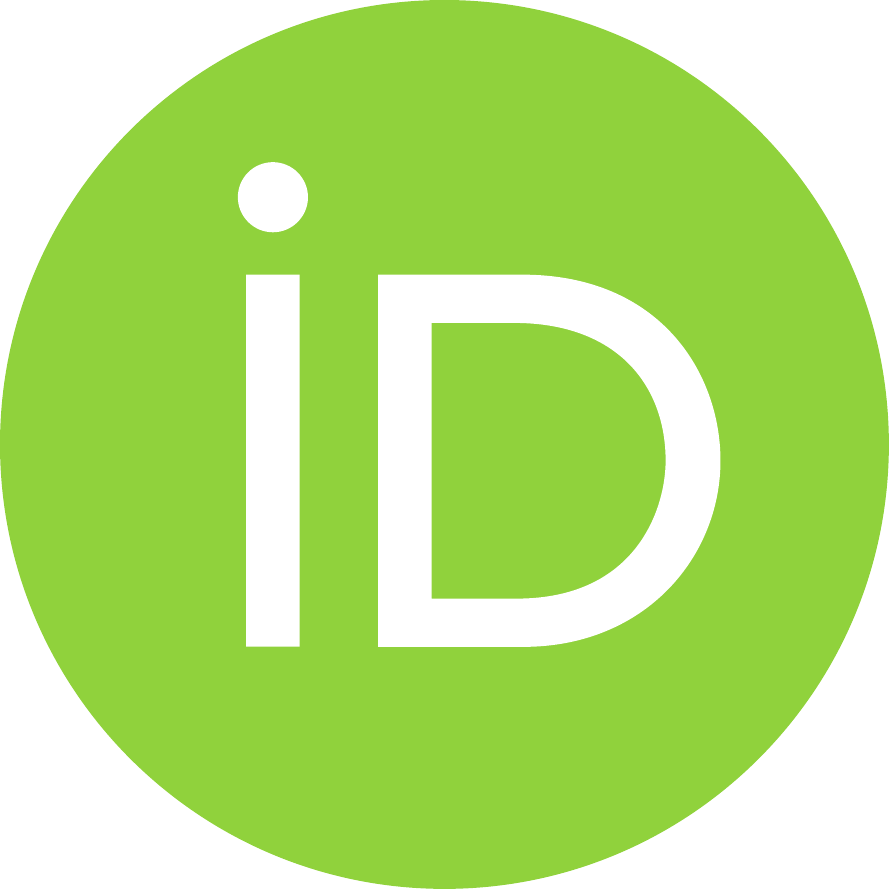}}} 
\title{On the Parameterized Complexity of the\\ \clubedel{s} Problem}
\titlerunning{On the Parameterized Complexity of \clubedel{s}}
\newcommand{\ethshort}{$\mathtt{ETH}$\xspace}
\newcommand{\cedit}{\textsc{Cluster Editing}\xspace}
\newcommand{\cedel}{\textsc{Cluster Edge Deletion}\xspace}
\newcommand{\clubedel}[1]{\textsc{${#1}$-Club Cluster Edge Deletion}\xspace}
\newcommand{\clubvdel}[1]{\textsc{${#1}$-Club Cluster Vertex Deletion}\xspace}
\newcommand{\tw}{\textnormal{tw}}
\newcommand{\bnd}[1]{{\partial{#1}}}
\newcommand{\intr}[1]{{\text{int(}{#1}\text{)}}}
\author{
Fabrizio Montecchiani\orcidID{0000-0002-0543-8912}
\and Giacomo Ortali\orcidID{0000-0002-4481-698X}
\and \\Tommaso Piselli
\and Alessandra Tappini\orcidID{0000-0001-9192-2067}}
\authorrunning{F. Montecchiani et al.}
\institute{Dipartimento di Ingegneria, Università degli Studi di Perugia, Italy 
\email{\{fabrizio.montecchiani,giacomo.ortali,alessandra.tappini\}@unipg.it, tommaso.piselli@studenti.unipg.it}}
\begin{document}

\maketitle

\begin{abstract}
We study the parameterized complexity of the \clubedel{s} problem: Given a graph $G$ and two  integers $s \ge 2$ and $k \ge 1$, is it possible to remove at most $k$ edges from $G$ such that each connected component of the resulting graph has diameter at most~$s$? This problem is known to be \NP-hard already when $s = 2$. We prove that it admits a fixed-parameter tractable algorithm when parameterized by $s$ and the treewidth of the input graph.
\end{abstract}

\section{Introduction}
Graph clustering~\cite{DBLP:journals/csr/Schaeffer07} is a classical task in data mining, with important applications in numerous fields including computational biology~\cite{DBLP:journals/jcb/Ben-DorSY99}, image processing~\cite{DBLP:journals/pami/WuL93}, and machine learning~\cite{DBLP:journals/ml/BansalBC04}. At high-level, the task is to group the vertices of the graph into clusters in such a way that there should be many edges within each cluster and relatively few between different clusters. A prominent formalization  is \cedit (also known as \textsc{Correlation Clustering}): Given a graph $G$ and an integer $k$ as input, the goal is to find a sequence of $k$ operations, each of which can be an edge or vertex insertion or removal, such that the resulting graph is a so-called \emph{cluster graph}, i.e., each of its connected components is a clique. If we restrict the editing operations to be edge removals only,  then the problem is known as \cedel. Namely, \cedel takes a graph $G$ and an integer $k$ as input, and asks for a set of $k$ edges whose removal yields a cluster graph. Equivalently, we seek for a partition of the vertices of $G$ into cliques, such that the  inter-cluster edges (whose end-vertices belong to different cliques) are at most $k$.

Unfortunately, both \cedit and  \cedel are well-known to be \NP-complete in general~\cite{DBLP:journals/dam/ShamirST04},  even when the input instances have bounded vertex degree~\cite{DBLP:journals/dam/KomusiewiczU12}. Indeed, their parameterized complexity with respect to the natural parameter $k$ has been intensively investigated; in particular, both problems are in \FPT~\cite{DBLP:journals/jda/Bocker12,DBLP:journals/jcss/ChenM12}, but do not allow subexponential-time parameterized algorithms unless \ethshort fails~\cite{DBLP:journals/jcss/FominKPPV14,DBLP:journals/dam/KomusiewiczU12}. 

In many applications, modelling clusters with cliques might be a severe limitation, for instance, in presence of noise in the data collection process. Consequently, several notions of relaxed cliques have been introduced and investigated~\cite{Balasundaram2013,DBLP:journals/algorithms/Komusiewicz16}. We focus on the concept of $s$-club, in which each pair of vertices is at distance at most $s \ge 2$ in the cluster. (Note that a $1$-club is in fact a clique). We remark that defining clusters as $s$-clubs proved to be effective in several application scenarios such as social network analysis and bioinformatics~\cite{alba,DBLP:journals/jco/BalasundaramBT05,laan,mokken,mokken2}. The \clubedel{s} problem can be stated analogously as \cedel by replacing cliques with $s$-clubs (see \Cref{se:preliminaries} for formal definitions). Unfortunately, \clubedel{s} is \NP-complete already for $s=2$~\cite{DBLP:conf/aaim/LiuZZ12}. Also, \clubedel{2} belongs to \FPT~parameterized by $k$~\cite{DBLP:journals/corr/abs-2107-01133,DBLP:conf/aaim/LiuZZ12}, and it admits no subexponential-time parameterized  algorithm in $k$~\cite{DBLP:journals/jcss/MisraPS20}. More in general, for any $s \ge 2$, \clubedel{s} cannot be solved in time $2^{o(k)}n^{O(1)}$ unless \ethshort fails~\cite{DBLP:journals/jcss/MisraPS20}.

Based on the above discussion, we know that it is unlikely that \clubedel{s} lies is \FPT~when parameterized by $s$, whereas the complexity of the problem parameterized by $s+k$ is open to the best of our knowledge. In this paper, we instead focus on those scenarios in which the solution size (measured by $k$) is large, and we still aim for tractable problems based on alternative parameterizations. In this respect,  treewidth is a central parameter in the parameterized complexity analysis (see~\cite{DBLP:series/mcs/DowneyF99,DBLP:journals/jal/RobertsonS86}). We prove that \clubedel{s} lies in  \FPT~when parameterized by $s+\tw$, where $\tw$ is an upper bound for the treewidth of the input graph. More precisely, our main contribution can be summarized as follows.

\medskip\begin{theorem}\label{th:main}
Let $G$ be an $n$-vertex graph of treewidth at most  $\tw$. There is an algorithm that solves the \clubedel{s} problem on $G$ in $O(2^{2^{O(\tw^2 \log s)}}\cdot n)$ time.
\end{theorem}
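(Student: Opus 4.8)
The plan is to solve the problem by dynamic programming over a tree decomposition of $G$. \emph{Setup.} First compute a tree decomposition of $G$ of width $O(\tw)$ in time $2^{O(\tw)}\cdot n$ (e.g.\ via the single-exponential constant-factor approximation of Bodlaender et al.), and turn it into a \emph{nice} tree decomposition with leaf, introduce-vertex, introduce-edge, forget, and join nodes, at the cost of an $O(\tw)$ factor in the number of nodes. We process nodes bottom-up. The obstacle is that the target property---every component of $G$ minus the deleted edges has diameter at most $s$---is \emph{not local}: whether a component is good is witnessed by paths that may span the whole component, using vertices that have already been forgotten and vertices not yet introduced. So the record we keep at a node $t$ must carry enough distance information (a) to decide, later, whether a component that still meets the current bag $X_t$ can be completed to a diameter-$\le s$ component, and (b) to certify, exactly at the moment a component stops meeting the bag, that it already has diameter at most $s$.

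\emph{The DP table.} For a node $t$, let $G_t$ be the graph seen in the subtree of $t$ and let $D\subseteq E(G_t)$ be a candidate deletion set. A \emph{record} for $t$ stores: (i) the partition of $X_t$ into the classes $X_t\cap C$ over the components $C$ of $G_t-D$; (ii) for each class, the pairwise distances in $G_t-D$ among its vertices, \emph{capped} at $s$ (larger values stored as $\infty$); (iii) for each class, a bounded summary of the vertices of the corresponding component already forgotten at $t$: namely which \emph{types} occur, a type being the capped vector of distances from a forgotten vertex to the class, and, for each pair of types, whether the component currently contains two such forgotten vertices at distance more than $s$; and (iv) the minimum size of a deletion set realizing (i)--(iii). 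Records in which a component already disjoint from $X_t$ (a \emph{closed} component) has diameter larger than $s$ are not stored. Capping at $s$ is sound because a distance exceeding $s$ can only be decreased by routing through the separator $X_t$, and whether such a detour keeps the distance $\le s$ is decidable from the entries that are $\le s$; the type-based summary is sound because $X_t$ separates the already-forgotten part from the rest of $G$, so a path between two forgotten vertices either avoids $X_t$---its length is then frozen and is captured by the ``distance more than $s$'' flags---or crosses $X_t$ and is then recomputable from the types and the bag distances. A direct count bounds the number of records per node by $2^{2^{O(\tw^2\log s)}}$.

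\emph{Transitions.} Leaves and introduce-vertex nodes are trivial (a new singleton component). At an introduce-edge node $uv$ we branch on deleting $uv$ (component structure unchanged, counter $+1$) or keeping it: keeping $uv$ either fuses the classes of $u$ and $v$, whose new mutual and cross distances are obtained through the fresh edge since it was their only link, or, if $u$ and $v$ already share a class, merely creates shortcuts; in every case we refresh the capped distances by a Floyd--Warshall-style step and propagate the resulting shortcuts into the forgotten types and the flags. At a forget node for $v$ we move $v$ into the forgotten part of its component, turning $v$'s current distance vector into a new type and updating the flags against the existing types; if $v$ was the component's last bag vertex, the component becomes closed and we verify from the stored information that it has diameter $\le s$, otherwise we discard the record. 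At a join node we combine two records over the common bag: counters add; the two partitions are merged by the transitive closure of ``share a vertex''; the bag distances are recomputed by shortest paths using both children's matrices; and the forgotten part of each merged component is the union of the relevant children's forgotten parts, with all types and flags---in particular those of cross-branch pairs---recomputed from the types and the new bag distances. At the root, whose bag is empty, we accept iff some record has counter at most $k$; a solution is recovered by standard backtracking.

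\emph{Correctness and running time.} Correctness follows from the two usual inductions: \emph{soundness} (each stored record is realized by an actual partial solution of the stated cost) and \emph{completeness} (for any fixed global solution $D$, the sequence of records it induces is stored along the way). The heart of completeness is that the record is a faithful summary of $D\cap E(G_t)$ for the purpose of extension, and this is precisely where the justifications for capping and for the type-based summary are used; I expect the most delicate point to be the join node, where combining two forgotten parts and recomputing cross-branch distances purely from types must be shown to lose nothing, leaning hardest on the separator property of bags and on the exact way distances are capped. For the running time, the nice tree decomposition has $O(\tw\cdot n)$ nodes and each node---the join node being costliest, since it combines pairs of records---is processed in time polynomial in the number of records, i.e., $2^{2^{O(\tw^2\log s)}}$; together with the $2^{O(\tw)}\cdot n$ preprocessing this yields the claimed bound $O\!\bigl(2^{2^{O(\tw^2\log s)}}\cdot n\bigr)$.
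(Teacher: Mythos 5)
Your proposal is correct and follows essentially the same route as the paper: a bottom-up DP over a nice tree decomposition whose per-node record consists of the partition of the bag, pairwise bag distances capped at $s$, the set of capped distance "types" of forgotten vertices (the paper's $H$ tables), and, for pairs of forgotten vertices still at distance greater than $s$, enough information to decide later whether a detour through the bag repairs them (your per-pair-of-types flags encode exactly the paper's "request" tables, which are themselves determined by the two types), with the same transition logic at introduce, forget, and join nodes and the same double-exponential record count. The only cosmetic differences are your use of introduce-edge nodes and of an approximate tree decomposition, neither of which affects the claimed bound.
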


 From the technical point of view, the main crux of our approach lies in the definition of sufficiently small records that allow to keep track of the distances between pairs of vertices in a (partial) $s$-club. With such records at hand, we then apply a standard DP algorithm over a tree decomposition of the input graph, which still requires a nontrivial amount of technicalities in order to update the records. Our records have some similarities, but also several key differences and extensions, with those used in a technique presented by Dondi and Lafond in~\cite[Thm. 14]{DBLP:conf/fct/DondiL19} to solve a different problem. Namely, they describe a fixed-parameter tractable algorithm for the problem of deciding whether the vertices of a graph can be covered with at most $d$ (possibly overlapping)  $2$-clubs, parameterized by treewidth. The main novelties of the presented approach with respect to~\cite{DBLP:conf/fct/DondiL19} will be suitably highlighted throughout the paper.

\section{Preliminaries}\label{se:preliminaries}
For any $d \in \mathbb{Z}^+$, we use $[d]$ as shorthand for the set $\{1,2,\dots,d\}$. Let $G=(V,E)$ be a graph. For any $W \subset V$, we denote by $G[W]$ the subgraph of $G$ induced by the vertices of $W$. The \emph{neighborhood} of a vertex $v$ of $G$ is defined as $N_G(v)=\{u: uv\in E \}$. Given two vertices $u,v \in V$, the \emph{distance in $G$} between $u$ and $v$, denoted by $d_G(u,v)$, is the number of edges in any shortest path between $u$ and $v$ in $G$. The \emph{diameter} of $G$ is the maximum distance in $G$ between any two of its vertices.  An \emph{$s$-club} of $G$, with $s \ge 1$, is a subset $W \subseteq V$ such that the diameter of $G[W]$ is at most $s$. A \emph{partition} of $G$ is a collection of subsets $\mathcal{C} = \{C_i\}_{i \in [d]}$ such that: (a) $\bigcup_{i=1}^{d}C_i = V$, and (b) $C_i \cap C_j = \emptyset$ for each $i,j \in [d]$ with $i \neq j$. We denote by $E_\mathcal{C}$ the set of all edges $uv$ of $G$ such that $u,v \in C_i$, for some $i \in [d]$. 
We are interested in the following problem.

\medskip\noindent\fbox{%
  \parbox{0.95\textwidth}{
\clubedel{s}\\
\textnormal{\textbf{Input:}} $G=(V,E)$, $k \ge 1$, $s \ge 2$.\\
\textnormal{\textbf{Output:}} A partition $\mathcal{C}= \{C_i\}_{i \in [d]}$ of $G$ such that $C_i$ is an $s$-club for each $i \in [d]$, and  $|E \setminus E_\mathcal{C}| \le k$. 
  }%
}

\medskip Let $(\mathcal{X},T)$ be a pair such that $\mathcal{X}=\{X_i\}_{i \in [\ell]}$ is a collection of subsets of vertices of a graph $G=(V,E)$, called \emph{bags}, and $T$ is a tree whose nodes are in  one-to-one correspondence with the elements of $\mathcal X$. When this creates no ambiguity, $X_i$ will denote both a bag of $\mathcal{X}$ and the node of $T$ whose corresponding bag is $X_i$. The pair $(\mathcal{X},T)$ is a \emph{tree-decomposition} of $G$ if:
\begin{inparaenum}[(i)]
\item $\bigcup_{i\in [\ell]} X_i = V$,
\item for every edge $uv$ of $G$, there exists a bag $X_i$ that contains both $u$ and~$v$, and
\item for every vertex $v$ of $G$, the set of nodes of $T$ whose bags contain $v$ induces a non-empty (connected) subtree of $T$.
\end{inparaenum}
The \emph{width} of  $(\mathcal{X},T)$ is $\max_{i=1}^\ell {|X_i| - 1}$, while the \emph{treewidth} of $G$, denoted by $\tw(G)$, is the minimum width over all tree-decompositions of $G$. 
 A tree-decomposition $(\mathcal{X},T)$ of a graph $G$ is \emph{nice} if $T$ is a rooted tree with the following additional  properties~\cite{DBLP:journals/jal/BodlaenderK96}.
 \begin{inparaenum}[(P.1)]
 \item Every node of $T$ has at most two children.
 \item If a node  $X_i$ of $T$ has two children whose bags are $X_j$ and $X_{j'}$, then $X_i=X_j=X_{j'}$. In this case, $X_i$ is a \emph{join bag}.
 \item If a node $X_i$ of $T$ has only one child~$X_j$, then $X_i \neq X_j$ and there exists a vertex $v \in G$ such that either $X_i = X_j \cup \{v\}$ or $X_i \cup \{v\} = X_j$. In the former case  $X_i$ is an \emph{introduce bag}, while in the latter case  $X_i$ is a \emph{forget bag}.
 \item If a node $X_i$ is the root or a leaf of $T$, then $X_i=\emptyset$.\end{inparaenum} 
~Given a tree-decomposition of width $w$ of $G$, a nice tree-decomposition of width $w$ can be computed in $O(w \cdot n)$ time~\cite{DBLP:books/sp/Kloks94}.

\section{Proof of \Cref{th:main}}

The proof is based on a DP algorithm over a nice tree-decomposition. We first describe the records to be stored at each~bag, and we then present~the~algorithm.

\subsection{Definition of the records}
Let $G=(V,E)$ be an $n$-vertex graph and let $(\mathcal{X},T)$ be a nice tree-decomposition of $G$ of width $\tw$. For each $i \in [\ell]$, let $T_i$ be the subtree of $T$ rooted at the bag $X_i \in \mathcal{X}$ and let $G_i=(V_i,E_i)$ be the subgraph of $G$ induced by the vertices that belong to at least one bag of $T_i$. Moreover, given a subset of vertices $C \subseteq V_i$, we call $C$ a \emph{potential $s$-club}, and let $\bnd{C}=C \cap X_i$  and $\intr{C}=C \setminus X_i$. We are now ready to describe the items of the record to be stored at each bag. Some of these items are similar to those described in~\cite{DBLP:conf/fct/DondiL19}, although with some important differences that allow us to deal with any fixed value of $s$. 

    The first item of the record is a table that stores the pairwise distances of the vertices in $\bnd{C}$. Namely, let $D(\bnd{C})$ be a table having one row and one column for each vertex in $\bnd{C}$, and such that:
$$ 
D(\bnd{C})[a,b]=
\begin{cases}
0,  &\text{if}~a=b\\
d_{G[C]}(a,b), &\text{if}~1 \le d_{G[C]}(a,b) \le s\\
\infty, &\text{otherwise}.
\end{cases}
$$
     Observe that  $D(\bnd{C})$ contains at most $(\tw+1)^2 \in O(\tw^2)$ entries.
    
    \smallskip The second item of the record is a table that stores the distance between pairs of vertices such that one is in $\bnd{C}$ and the other is in $\intr{C}$. Two vertices $u,u'$ in $\intr{C}$ are \emph{equivalent} with respect to $\bnd{C}$, if for each vertex $a \in \bnd{C}$, then either $1 \le d_{G[C]}(u,a)=d_{G[C]}(u',a) \le s$, or $d_{G[C]}(u,a) > s$ and $d_{G[C]}(u',a) > s$.     Namely, let $H(\bnd{C})$ be a table having one column for each vertex $a \in \bnd{C}$, and one row for each equivalence class $[u]_{\bnd{C}}$ with respect to $\bnd{C}$. In particular, we have: 
$$ 
H(\bnd{C})[u,a]=
\begin{cases}
d_{G[C]}(u,a),  &\text{if}~1 \le d_{G[C]}(u,a) \le s\\
\infty, &\text{otherwise}.
\end{cases}
$$    
    
    Observe that  $H(\bnd{C})$ contains at most $(s+1)^{(\tw+1)}$ rows, and hence $O(\tw \cdot s^{O(\tw)}) = O(2^{O(\tw \log{s})})$ entries. It is worth observing that, if $H(\bnd{C})$ contains two rows $r$ and $r'$ such that cell-wise the values of $r$ are smaller than or equal to those of $r'$, then we can avoid storing $r$ and keep only $r'$ in $H(\bnd{C})$. However, this would not reduce the asymptotic size of $H(\bnd{C})$.
    
    \smallskip The third (and last) item of the record represents the key insight to extend the result in~\cite{DBLP:conf/fct/DondiL19}. Roughly speaking, in the case $s=2$, pairs of vertices in $\intr{C}$ must have distance at most two in $G[C]$, otherwise there is no $C' \supseteq C$ such that $C'$ is a $2$-club of $G$. Unfortunately, this is not true in general when $s>2$. On the other hand, suppose that $C$ is a subset of an $s$-club $C'$ of $G$ and that there exist two vertices $u,u' \in \intr{C}$ whose distance in $G[C]$ is larger than $s$. Then, by the properties of a tree-decomposition, the shortest path in $G$ between $u$ and $u'$ goes through some pair of vertices in $\bnd{C}$. We formalize this observation. Let $w,z \in \intr{C}$ be two vertices such that $d_{G[C]}(w,z)>s$. A \emph{request for $\bnd{C}$}, denoted by $R_{wz}$, is a table having one row and one column for each vertex in $\bnd{C}$. Namely, for each $a,b \in \bnd{C}$, if there exists $2 \le \delta \le s-2$ such that connecting $a$ and $b$ with a path $\pi$ of length $\delta$ makes the distance between $w$ and $z$ to be at most $s$, then $R_{wz}[a,b]=\delta$, while $R_{wz}[a,b]=\star$ otherwise.  Observe that if there exist two requests $R_{wz}$ and $R_{w'z'}$ such that $R_{wz}[a,b]=R_{w'z'}[a,b]$ for each pair $a,b \in \bnd{C}$, then $w$ and $w'$ are equivalent with respect to $\bnd{C}$ (i.e., $w,w' \in [w]_{\bnd{C}}$), and the same holds for $z$ and $z'$. Therefore we can avoid storing duplicated requests, and we denote by $Q(\bnd{C})$ the set containing all distinct requests for~$\bnd{C}$. Also, $Q$ contains at most $(s-2)^{(\tw+1)^2} \in O(s^{O(\tw^2)}) = O(2^{O(\tw^2 \log{s})})$ distinct requests. 

\smallskip Before describing our algorithm, we need to slightly extend our notation. If a potential $s$-club $C$ is such that $\bnd{C} = \emptyset$ (recall that $C \subseteq V_i$), then we call it \emph{complete}. Consider a partitioning $\mathcal{P}^l_i$ of $G_i$ into potential $s$-clubs and let  $\mathcal{C}^l_i=\{C^l_{j,i} \mid j \in [d_l]\}$ be the potential $s$-clubs in $\mathcal{P}^l_i$ that are not complete, i.e., any $C \in \mathcal{C}^l_i$ is such that $\bnd{C} \neq \emptyset$. In particular, let  $\bnd{\mathcal{C}^l_i}=\{\bnd{C^l_{j,i}} \mid j \in [d_l]\}$. Moreover, we let $\mathcal{D}^l_i= \{D(\bnd{C^l_{j,i}}) \mid j \in [d_l]\}$, $\mathcal{H}^l_i= \{H(\bnd{C^l_{j,i}}) \mid  j \in [d_l]\}$, and $\mathcal{Q}^l_i= \{Q(\bnd{C^l_{j,i}}) \mid j \in [d_l]\}$. A \emph{solution} of $X_i$ is a tuple $S^l_i=\langle \bnd{\mathcal{C}^l_i}, \mathcal{D}^l_i, \mathcal{H}^l_i, \mathcal{Q}^l_i, k^l_i  \rangle$. Here  $k^l_i$ is an integer, called \emph{edge-counter}, equal to $|E_i \setminus \mathcal{P}_i^l(E_i)|$, i.e., $k^l_i$ counts the number of edges having their endpoints in different potential $s$-clubs of $\mathcal{P}_i^l$, 
hence $k^l_i \le k$. Two solutions $S^l_i=\langle \bnd{\mathcal{C}^l_i}, \mathcal{D}^l_i, \mathcal{H}^l_i, \mathcal{Q}^l_i,  k^l_i  \rangle$ and $S^{g}_i=\langle \bnd{\mathcal{C}^g_i}, \mathcal{D}^g_i, \mathcal{H}^g_i, \mathcal{Q}^g_i, k^g_i  \rangle$ are \emph{distinct} if $\bnd{\mathcal{C}^l_i} \neq \bnd{\mathcal{C}^g_i}$, or $\mathcal{D}^l_i \neq \mathcal{D}^g_i$, or $\mathcal{H}^l_i \neq \mathcal{H}^g_i$, or $\mathcal{Q}^l_i \neq \mathcal{Q}^g_i$. Observe that if $S^l_i$ and $S^g_i$ are not distinct but $k^l_i < k^g_i$, then it suffices to consider only $S^l_i$. 

\begin{lemma}\label{le:number-solutions}
For a bag $X_i$, there exist $O(2^{2^{O(\tw^2 \log s)}})$ distinct solutions.
\end{lemma}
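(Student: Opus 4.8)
The plan is to bound the number of distinct solutions $S^l_i = \langle \bnd{\mathcal{C}^l_i}, \mathcal{D}^l_i, \mathcal{H}^l_i, \mathcal{Q}^l_i, k^l_i\rangle$ by multiplying together a bound on the number of choices for each of the first four components (the edge-counter $k^l_i$ being irrelevant by the last remark preceding the lemma, since among non-distinct solutions we keep only the one with smallest counter). The key observation is that each of $\bnd{\mathcal{C}^l_i}$, $\mathcal{D}^l_i$, $\mathcal{H}^l_i$, $\mathcal{Q}^l_i$ is a \emph{set} indexed by the at most $|X_i| \le \tw+1$ non-complete potential $s$-clubs touching the bag, so really we are counting, for each of the (at most $2^{\tw+1}$) ways of partitioning $X_i$ into the traces $\bnd{C^l_{j,i}}$, how many distinct tuples of tables can be attached.

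First I would count the partitions of $\bnd{\mathcal{C}^l_i}$: since $\bnd{C^l_{j,i}} \subseteq X_i$ and these traces are pairwise disjoint and nonempty, the number of possible collections $\bnd{\mathcal{C}^l_i}$ is at most the number of partitions of a $(\tw+1)$-set, i.e.\ the Bell number $B_{\tw+1} = 2^{O(\tw \log \tw)}$, which is dominated by the later bounds. Next, for a fixed trace $\bnd{C}$ of size at most $\tw+1$: the table $D(\bnd{C})$ has at most $(\tw+1)^2$ entries each taking one of $s+2$ values (namely $0,1,\dots,s,\infty$), so there are at most $(s+2)^{(\tw+1)^2} = 2^{O(\tw^2 \log s)}$ choices for it; the table $H(\bnd{C})$ has at most $(s+1)^{\tw+1}$ rows (one per equivalence class) and $\tw+1$ columns with $s+1$ possible entries each, so it is one of at most $(s+1)^{(\tw+1)\cdot(s+1)^{\tw+1}} = 2^{2^{O(\tw \log s)}}$ possible tables; and $Q(\bnd{C})$ is a subset of the set of all possible request tables, of which (as stated in the excerpt) there are at most $(s-2)^{(\tw+1)^2} = 2^{O(\tw^2 \log s)}$, so $Q(\bnd{C})$ is one of at most $2^{2^{O(\tw^2 \log s)}}$ sets. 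Multiplying these per-trace bounds over the at most $\tw+1$ traces only multiplies the exponents by $\tw+1$, keeping us at $2^{2^{O(\tw^2 \log s)}}$ after absorbing the polynomial factor into the $O(\cdot)$ in the exponent.

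Finally I would combine: the total is at most $B_{\tw+1}$ times $\big(2^{O(\tw^2\log s)} \cdot 2^{2^{O(\tw\log s)}} \cdot 2^{2^{O(\tw^2\log s)}}\big)^{\tw+1}$, and since the largest term is the $Q$-contribution $2^{2^{O(\tw^2\log s)}}$, raising it to the power $\tw+1$ gives $2^{(\tw+1)\cdot 2^{O(\tw^2\log s)}} = 2^{2^{O(\tw^2\log s)}}$ (the extra $\tw+1$ factor is swallowed since $2^{O(\tw^2\log s)}$ already dominates $\log(\tw+1)$), and all other factors are likewise absorbed. Hence there are $O(2^{2^{O(\tw^2\log s)}})$ distinct solutions.

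I expect the only real subtlety — not an obstacle so much as a bookkeeping point — to be making sure the per-trace bounds are multiplied (not exponentiated in a worse way) when aggregating over the up-to-$\tw+1$ traces, and checking that the dominant double-exponential term $2^{2^{O(\tw^2\log s)}}$ genuinely absorbs the polynomial-in-$\tw$ multiplicities coming both from the number of traces and from the Bell-number count of partitions; this is routine since a polynomial factor in the base corresponds to an additive $O(\log \tw)$ in the first exponent, which is negligible against $2^{O(\tw^2\log s)}$.
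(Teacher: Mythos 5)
Your proposal is correct and follows essentially the same route as the paper: count the partitions of $X_i$, then bound the number of possible $D$, $H$, and $Q$ tables per trace and multiply, with the $Q$-contribution $2^{2^{O(\tw^2\log s)}}$ dominating. Your accounting is in fact slightly more careful (e.g., using the Bell number $2^{O(\tw\log\tw)}$ for the partition count, and explicitly handling the product over the up to $\tw+1$ traces), but the argument is the same.
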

\begin{proof}
The number of distinct solutions is at most $n_C \times n_D \times n_H \times n_Q$, where $n_C$ is the number of possible partitions of the vertices in $X_i$, while $n_D$, $n_H$, $n_Q$ are the number of possible sets $\mathcal{D}^l_i$, $\mathcal{H}^l_i$, and $\mathcal{Q}^l_i$, respectively, for a fixed partition. Thus, since $n_C \in O(2^{O(\tw)})$, $n_D \in O(2^{O(\tw^2 \log{s})})$, $n_H \in O(2^{2^{O(\tw \log s)}})$, and $n_Q \in O(2^{2^{O(\tw^2 \log s)}})$, the statement follows.
\qed\end{proof}

\subsection{Description of the algorithm}

We are now ready to describe our DP algorithm over $(\mathcal{X},T)$. The main differences with respect to the algorithm in~\cite{DBLP:conf/fct/DondiL19} lie in the management of the sets $Q(\cdot)$ (which do not exist in~\cite{DBLP:conf/fct/DondiL19}), and on a more sophisticated updating of the tables $D(\cdot)$ and $H(\cdot)$, as a consequence of the non applicability of some simplifying assumptions that hold only when $s=2$. Moreover, we also keep track of the number of edges having their end-vertices in different potential $s$-clubs.

Let $X_i$ be the current bag visited by the algorithm. We compute the set of solutions for $X_i$ based on the solutions computed for its child or children (if any). In case the resulting set of solutions is empty, the algorithm halts and returns a negative answer. We distinguish four cases based on the type of $X_i$.

\medskip\noindent\textbf{$X_i$ is a leaf bag.} In this case $X_i$ is empty and there exists only one trivial  solution $S^1_i$, in which all tables are empty and $k^1_i=0$.

\medskip\noindent\textbf{$X_i$ is an introduce bag.} Let  $X_j =X_i  \setminus \{v\}$ be the child of $X_i$.  The algorithm exhaustively extends each solution $S^l_j$ of $X_j$ as follows. It first generates at most $d_l$ new partitions by placing $v$ in each $\bnd{C'} \in \bnd{\mathcal{C}^l_j}$. Also, it generates a partition in which $v$ forms a new potential $s$-club $C = \bnd{C} = \{v\}$. Consider one of the new partitions generated by the algorithm. 
In order to build the corresponding new solution for $X_i$, we distinguish the following two cases.

\medskip\noindent\textbf{Case A ($\bnd{C}=\{v\}$).}  $D(\bnd{C})$ is trivially defined,  $H(\bnd{C})$ and $Q(\bnd{C})$ are empty. 

\medskip\noindent\textbf{Case B ($\bnd{C} = \bnd{C'} \cup \{v\}$).}
The next observation immediately follows from the fact that
 $\bnd{C} =\bnd{C'} \cup \{v\}$ and $\intr{C}=\intr{C'}$.
 
 \begin{observation}\label{obs:introduce-distance}
 Suppose that there exist $a,b \in \bnd{C'}$ such that $d_{G[C']}(a,b) > d_{G[C]}(a,b)$, then any shortest path between $a$ and $b$ in $G[C]$ contains vertex~$v$.
 \end{observation}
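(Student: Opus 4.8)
The plan is to argue by contradiction, leaning on the fact that in Case B the potential $s$-club $C$ is obtained from $C'$ by adding the single newly introduced vertex $v$. First I would record the structural identity $C = C' \cup \{v\}$ with $v \notin C'$. The equality $C = C' \cup \{v\}$ is immediate from the case hypotheses $\bnd{C} = \bnd{C'} \cup \{v\}$ and $\intr{C} = \intr{C'}$, since $C = \bnd{C} \cup \intr{C}$ and $C' = \bnd{C'} \cup \intr{C'}$. For $v \notin C'$, recall that $X_i$ is an introduce bag with child $X_j = X_i \setminus \{v\}$: by the connectivity property of a tree-decomposition, the bags containing $v$ induce a connected subtree of $T$ that contains $X_i$ but not $X_j$, hence it contains no node of $T_j$; therefore $v \notin V_j \supseteq C'$. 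Consequently $G[C \setminus \{v\}] = G[C']$, i.e., deleting $v$ from the graph induced by $C$ returns exactly the graph induced by $C'$.

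With this in hand the core step is one line. Suppose for a contradiction that some shortest path $\pi$ between $a$ and $b$ in $G[C]$ does not pass through $v$ (note $a,b \in \bnd{C'} \subseteq C$, so both endpoints lie in $G[C]$ and in $G[C']$). Then $\pi$ is entirely contained in $G[C \setminus \{v\}] = G[C']$, so it witnesses $d_{G[C']}(a,b) \le |\pi| = d_{G[C]}(a,b)$, contradicting the assumption $d_{G[C']}(a,b) > d_{G[C]}(a,b)$. Hence every shortest $a$--$b$ path in $G[C]$ must use $v$, which is what the observation asserts.

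I do not expect any real obstacle: the statement just formalizes the principle that inserting one vertex can only create new shortest paths that route through that vertex. The only point that needs a sentence of care is the justification that $v \notin C'$ (equivalently $G[C \setminus \{v\}] = G[C']$), which rests on the defining property (P.3) of an introduce bag together with the connectivity axiom of a tree-decomposition; everything else is a direct distance comparison.
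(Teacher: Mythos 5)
Your proof is correct and matches the paper's (implicit) argument: the paper simply asserts that the observation ``immediately follows from the fact that $\bnd{C} = \bnd{C'} \cup \{v\}$ and $\intr{C} = \intr{C'}$,'' and your contradiction argument — a shortest $a$--$b$ path in $G[C]$ avoiding $v$ would live in $G[C\setminus\{v\}]=G[C']$ and contradict $d_{G[C']}(a,b) > d_{G[C]}(a,b)$ — is exactly the spelled-out version of that, with the additional (correct) justification via the tree-decomposition connectivity property that $v \notin C'$.
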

 
 
\noindent -- Computing $D(\bnd{C})$ from $D(\bnd{C'})$\footnote{Since the matrix is symmetric, when we update a cell $D(\bnd{C})[a,b]$ we assume that also $D(\bnd{C})[b,a]$ is updated with the same value.}.
        \begin{enumerate}
            \item We add a new row and a new column for vertex $v$. 
            \item For each vertex $a \in \bnd{C'}$, let $\delta_{av}=\min_{b \in N_{G[X_i]}(v)}D(\bnd{C'})[a,b]$, and note that $\delta_{av}=0$ if edge $av$ belongs to $G[C]$. Clearly, it holds that
            $$D(\bnd{C})[a,v]=
            \begin{cases}
            \infty, &\text{if}~\delta_{av} \in \{s,\infty\}\\
            1+\delta_{av}, &\text{otherwise.}
            \end{cases}$$
            
            Hence, we have $D(\bnd{C})[u_2,v]=3$.
            \item By \Cref{obs:introduce-distance}, for each pair $a,b \in \bnd{C'}$, the corresponding value of $D(\bnd{C})$ can be updated as follows:  
            $$D(\bnd{C})[a,b] =  \min\{D(\bnd{C'})[a,b],D(\bnd{C})[a,v]+D(\bnd{C})[b,v]\}.$$

            
        \end{enumerate}
        
        
\noindent -- Computing $H(\bnd{C})$ from $H(\bnd{C'})$. 
        \begin{enumerate}
            \item We  add a new column for vertex $v$.
            \item For each equivalence class $[u]_{\bnd{C'}}$, let $\delta_{uv}=\min_{a \in N_{G[X_i]}(v)}{H(\bnd{C'})[u,a]}$. Since there is no edge $uv$ such that $u \in \intr{C}$, it follows that
            $$H(\bnd{C})[u,v]=
            \begin{cases}
            \infty, &\text{if}~\delta_{uv} \in \{s, \infty\}\\
            1+\delta_{uv}, &\text{otherwise.}
            \end{cases}$$
            \item By \Cref{obs:introduce-distance}, for each pair of vertices $u \in  \intr{C'}$ and $a \in \bnd{C'}$,  the corresponding value of $H(\bnd{C})$ can be updated as follows: 
            $$H(\bnd{C})[u,a] = \min\{H(\bnd{C'})[u,a],H(\bnd{C})[u,v]+D(\bnd{C})[v,a]\}.$$ 
        \end{enumerate}
\noindent -- Computing $Q(\bnd{C})$ from $Q(\bnd{C'})$. Note that the addition of $v$ cannot lead to new requests but it may actually yield the update of some  request in $Q(\bnd{C'})$.
        \begin{enumerate}
            \item For each request $R_{wz}$ in $Q(\bnd{C'})$, we verify whether, as a consequence of the introduction of $v$, there exists a cell $R_{wz}[a,b]$ such that $D(\bnd{C})[a,b] \le R_{wz}[a,b]$. If such a cell exists,  we say that $R_{wz}$ is \emph{fulfilled}. We add $R_{wz}$ to $Q(\bnd{C})$ if and only if $R_{wz}$ is  not fulfilled. 
            \item If $R_{wz}$ is not fulfilled, before adding it to $Q(\bnd{C})$, we update it as follows:
            \begin{enumerate}
                \item We add a row and a column for $v$.
                \item For each pair $a,b \in \bnd{C'}$, we compute
                \begin{multline*}
                    \delta_{ab}= \min\{(H(\bnd{C})[w,a]+H(\bnd{C})[z,b], H(\bnd{C})[z,a]+H(\bnd{C})[w,b]\}.
                \end{multline*}
                Observe that $\delta_{ab} + D(\bnd{C})[a,b]>s$, otherwise the request would have been fulfilled before.
                \item By definition of request, we have $R_{wz}[a,b]=s-\delta_{ab}$, if $\delta_{ab} < s-1$, and  $R_{wz}[a,b]=\star$, otherwise.
            \end{enumerate}
        \end{enumerate}
    
Finally, in both \textbf{Case A} and \textbf{Case B}, we observe that, in order to obtain the edge-counter of the new solution, $k^l_j$ needs to be increased by the number of edges incident to $v$ whose other end-vertex is in $X_i$ but not in $C$. If the resulting edge-counter is greater than $k$, the solution is discarded. 

\medskip\noindent\textbf{$X_i$ is a forget bag.} Let $X_j = X_i \cup \{v\}$ be the child of $X_i$. The algorithm updates each solution $S^l_j$ of $X_j$ as follows. It first identifies the potential $s$-club $C' \in \bnd{\mathcal{C}^l_j}$ that vertex $v$ belongs to. Then, it verifies whether $\bnd{C'}=\{v\}$, i.e., whether removing $v$ from $C'$ makes it complete. 

\medskip\noindent\textbf{Case A ($\bnd{C'}=\{v\}$).} The algorithm verifies the following \emph{completion conditions}: (i) The value of each cell of $D(\bnd{C'})$ is at most $s$; (ii) The value of each cell of $H(\bnd{C'})$ is at most $s$; (iii) The set $Q(\bnd{C'})$ is empty. We observe the following.

\begin{observation}\label{obs:complete}
If $C'$ is complete, then it is an $s$-club of $G$ if and only if the completion conditions are satisfied.
\end{observation}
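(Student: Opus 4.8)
The plan is to derive the biconditional directly from the definitions of the three record items. In Case~A we have $\bnd{C'}=\{v\}$, hence $C'=\{v\}\cup\intr{C'}$, and $G[C']$ is precisely the graph with respect to which $D(\bnd{C'})$, $H(\bnd{C'})$, and $Q(\bnd{C'})$ are defined. By definition of an $s$-club, the statement ``$C'$ is an $s$-club of $G$'' is equivalent to ``$d_{G[C']}(x,y)\le s$ for every pair $x,y\in C'$''; the role of the hypothesis that $C'$ is complete is precisely that, at this point in the computation, $G[C']$ is fully determined, so the algorithm may settle the question. I would then split the pairs $x,y\in C'$ into three families: the single pair $\{v,v\}$; the pairs $\{u,v\}$ with $u\in\intr{C'}$; and the pairs $\{w,z\}$ with $w,z\in\intr{C'}$.

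The core of the proof is a one-to-one matching of these families with the three completion conditions. Condition~(i) concerns only the pair $\{v,v\}$ and is vacuously true, since $D(\bnd{C'})$ consists of the single entry $D(\bnd{C'})[v,v]=0\le s$. For the pairs $\{u,v\}$, recall that, by definition, $H(\bnd{C'})[u,v]$ equals $d_{G[C']}(u,v)$ when this value is at most $s$ and equals $\infty$ otherwise; hence condition~(ii), which requires every cell of $H(\bnd{C'})$ to be at most $s$, holds exactly when $d_{G[C']}(u,v)\le s$ for all $u\in\intr{C'}$. For the pairs $\{w,z\}$, recall that $Q(\bnd{C'})$ contains, by definition, a request $R_{wz}$ for \emph{every} pair $w,z\in\intr{C'}$ with $d_{G[C']}(w,z)>s$; therefore condition~(iii), i.e., $Q(\bnd{C'})=\emptyset$, holds exactly when no such pair exists. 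Combining the three equivalences shows that the completion conditions hold if and only if $d_{G[C']}(x,y)\le s$ for every pair $x,y\in C'$, which is the claim.

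I do not expect a genuine obstacle here: the statement is a direct unfolding of the record definitions, and the argument is essentially bookkeeping. The only points deserving explicit mention are (a) the justification, via completeness of $C'$, that whether $C'$ is an $s$-club of $G$ can be decided from the (now fixed) graph $G[C']$ at this step, and (b) the trivial remark that, with $\bnd{C'}=\{v\}$ a single vertex, each request $R_{wz}$ is a $1\times1$ table whose only cell must be $\star$ (prolonging a closed walk at $v$ cannot shorten a $w$--$z$ path), so all requests coincide; this does not affect the equivalence between $Q(\bnd{C'})=\emptyset$ and the absence of a pair in $\intr{C'}$ at distance greater than $s$, which is all the argument uses.
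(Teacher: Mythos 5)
Your proof is correct and matches the paper's reasoning: the paper states this as an observation without an explicit proof, and your partition of the pairs of $C'$ into boundary--boundary, boundary--interior, and interior--interior pairs, matched respectively to conditions (i), (ii), and (iii) via the definitions of $D(\bnd{C'})$, $H(\bnd{C'})$, and $Q(\bnd{C'})$, is precisely the intended definitional unfolding. Your side remarks (a) and (b) are also accurate and do not affect the equivalence.
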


If any of the completion conditions is not satisfied,  the solution $S^l_j$ is discarded. Otherwise,  we generate a new solution for $X_i$ such that $\bnd{\mathcal{C}^l_i} = \bnd{\mathcal{C}^l_j} \setminus\{\bnd{C'}\}$.

\medskip\noindent\textbf{Case B ($\bnd{C} \supset \{v\}$).}  First, we set $\bnd{C}= \bnd{C'} \setminus \{v\}$ and $\bnd{\mathcal{C}^l_i} = (\bnd{\mathcal{C}^l_j} \setminus\{\bnd{C'}\}) \cup \{\bnd{C}\}$. Note that  the distance in $G[C]$ between any two vertices is the same as in $G[C']$.  
    
\noindent -- Computing $D(\bnd{C})$ from $D(\bnd{C'})$: We  remove the row and column of $v$. 

\noindent --  Computing $H(\bnd{C})$ from $H (\bnd{C'})$: 
        \begin{enumerate}
            \item We remove the column corresponding to $v$.
            \item We check whether $H(\bnd{C'})$ already contains a row that represents the distances of $v$ with respect to the vertices in $\bnd{C}$. Namely, we check if there is an equivalence class $[u]_{\bnd{C}}$ such that for each vertex $a \in \bnd{C}$, it holds $H(\bnd{C})[u,a]=D(\bnd{C'})[a,v]$. If such a row does not exist, we add it to $H(\bnd{C})$. 
        \end{enumerate}
        
\noindent -- Computing $Q(\bnd{C})$ from $Q(\bnd{C'})$. Forgetting vertex $v$  causes the update of existing requests in $Q(\bnd{C'})$, as well as the introduction of new requests. 
        \begin{enumerate}
            \item To update the existing requests, for each request $R_{wz}$ in $Q(\bnd{C'})$:
            \begin{enumerate}
                \item We remove the row and column corresponding to $v$.
                \item We  verify that there exist at least two vertices $a,b \in \bnd{C}$ such that $R_{wz}[a,b] \neq \star$. If this is the case, we add the updated request to $Q(\bnd{C})$, otherwise we discard the solution $S^l_j$. 
            \end{enumerate}
            \item To introduce new requests, we  verify whether there exists a cell $H(\bnd{C'})[u,v]=\infty$, which represents the existence of an equivalence class $[u]_{\bnd{C'}}$ whose vertices have distance in $C'$ from $v$ larger than $s$. If such a cell exists, then:
            \begin{enumerate}
                \item We create a new request $R_{uv}$.
                \item  For each pair $a,b \in \bnd{C}$, we compute $$\delta_u = \min\{H(\bnd{C})[u,a],H(\bnd{C})[u,b]\}$$ and $$\delta_v = \min\{H(\bnd{C})[v,a],H(\bnd{C})[v,b]\}.$$
                \item By definition of request, we have $R_{uv}[a,b]=s-(\delta_u+\delta_v)$, if $\delta_u+\delta_v < s-1$, and  $R_{uv}[a,b]=\star$, otherwise.
                \item If the value of at least one cell of $R_{uv}$ is different from $\star$, then we add $R_{uv}$ to $Q^l_i(\bnd{C})$. Otherwise, we discard the solution $S^l_j$.
            \end{enumerate}
        \end{enumerate}
        

In both \textbf{Case A} and \textbf{Case B}, we observe that the edge-counter of the new solution can be set to be equal to the original $k^l_j$.

Finally, we observe that two solutions $S^l_i$ and $S^g_i$, stemming from two distinct solutions of $X_j$, may now be the same as a consequence of the removal of $v$, up to the values of $k^l_i$ and $k^g_i$. For each such a pair, it suffices to keep the solution with lower edge-counter.

\medskip\noindent\textbf{$X_i$ is a join bag.} Let $X_j = X_{j'}$ be the two children of $X_i$. The algorithm exhaustively merges each pair of solutions $S^l_j$ of $X_j$ and  $S^{l'}_{j'}$ of $X_{j'}$, if possible. A successful merge corresponds to a solution  of $X_i$. Without loss of generality, we can avoid merging $S^l_j$  and  $S^{l'}_{j'}$ when $\bnd{\mathcal{C}^l_j} \neq \bnd{\mathcal{C}^{l'}_{j'}}$, because a resulting solution   (if any), can be obtained by merging a different pair of solutions $S^h_j$ of $X_j$ and  $S^{h'}_{j'}$ such that $\bnd{\mathcal{C}^h_j} = \bnd{\mathcal{C}^{h'}_{j'}}$. Therefore we assume $\bnd{\mathcal{C}^l_j} = \bnd{\mathcal{C}^{l'}_{j'}}$. In other words, for each  $\bnd{C}$ in $\bnd{\mathcal{C}^l_j}$ there exists $\bnd{C'}$ in $\bnd{\mathcal{C}^{l'}_{j'}}$ such that $\bnd{C}=\bnd{C'}$. Also, in the following we denote by $C^*$ the potential $s$-club such that $\bnd{C^*}=\bnd{C}=\bnd{C'}$ and $\intr{C^*}=\intr{C} \cup \intr{C'}$. It remains to verify that, for each such $C^*$, each pair of vertices $u,u'$ such that $u \in \intr{C}$ and $u' \in \intr{C'}$ is either at distance at most $s$ or we can generate a new request for $u,u'$. We remark that, when $s>2$ (and hence differently from \cite{DBLP:conf/fct/DondiL19}), a new shortest path between $u$ and $u'$ may be formed by both vertices in $\intr{C}$ and vertices in $\intr{C'}$. Namely, we proceed as follows. 
\begin{itemize}
    \item For each pair $a,b \in \bnd{C}$, let $\omega_{ab}=\min\{D(\bnd{C})[a,b],D(\bnd{C'})[a,b]\}$. We construct the weighted complete graph $W^*$  on the vertex set $\bnd{C^*}$ and such that the weight of any edge $ab$ is $\omega_{ab}$. 
    \item Computing $D(\bnd{C^*})$ from $D(\bnd{C})$ and $D(\bnd{C'})$: By construction of $W^*$, it follows that for each pair $a,b \in \bnd{C^*}$, $D(\bnd{C^*})[a,b]$ corresponds to the weighted shortest path between $a$ and $b$ in $W^*$.
    
    \item Computing $H(\bnd{C^*})$ from $H(\bnd{C})$ and $H(\bnd{C'})$.
    
    \begin{enumerate}
        \item We first merge $H(\bnd{C})$ and $H(\bnd{C'})$  avoiding duplicated rows. Let $H(\bnd{C''})$ be the resulting table. 
        \item Similarly as in the previous step, for each equivalence class $[u]_{\bnd{C''}}$ of $H(\bnd{C''})$, we add a vertex $u$ to $W^*$ and, for each vertex $a$ of $W^*$ we add the edge $ua$ with weight  $\omega_{ua}=H(\bnd{C''})[u,a]$. Then  $H(\bnd{C^*})[u,a]$ corresponds to the weighted shortest path between $u$ and $a$ in the resulting graph. 
        \item As some rows of $H(\bnd{C^*})$ may be the same,  we remove possible duplicates.
    \end{enumerate}
    
    \item Computing $Q(\bnd{C^*})$ from $Q(\bnd{C})$ and $Q(\bnd{C'})$.
    
    \begin{enumerate}
        \item   We first merge the two sets $Q(\bnd{C})$ and 
    $Q(\bnd{C'})$ avoiding duplicated requests. Let $Q(\bnd{C''})$ be the resulting set of requests.  
    \item We  verify whether some of these requests have been fulfilled. 
        \begin{enumerate}
            \item For each request $R_{wz}$ in $Q(\bnd{C''})$, we add $R_{wz}$ to $Q(\bnd{C^*})$ if and only if $R_{wz}$ is  not fulfilled (i.e., there is no cell $R_{wz}[a,b]$ such that $D(\bnd{C^*})[a,b] \le R_{wz}[a,b]$). 
            \item If $R_{wz}$ is not fulfilled, before adding it to $Q(\bnd{C^*})$, we update it as follows:
            \begin{enumerate}
                \item For each pair of vertices $a,b \in \bnd{C^*}$, we compute
                \begin{multline*}
                    \delta_{ab}= \min\{(H(\bnd{C^*})[w,a]+H(\bnd{C^*})[z,b],\\ H(\bnd{C^*})[z,a]+H(\bnd{C^*})[w,b]\}.
                \end{multline*}
                Observe that $\delta_{ab} + D(\bnd{C^*})[a,b]>s$, otherwise the request would have been fulfilled before.
                \item Similarly as in the previous cases, by definition of requests it follows that $R_{wz}[a,b]=s-\delta_{ab}$, if $\delta_{ab} < s-1$, and ${R_{wz}[a,b]=\star}$, otherwise.
                \item If the value of at least one cell of $R_{wz}$ is different from $\star$,  we add $R_{wz}$ to $Q^l_i(\bnd{C^*})$. Otherwise, we discard the pair of solutions $S^l_j$, $S^{l'}_{j'}$.
            \end{enumerate}
        \end{enumerate}
   \item  We now generate new requests, if needed. 
   
   \begin{enumerate}
       \item  For each pair of rows $[w]_{\bnd{C_{j}}} \in H(\bnd{C_{j}})$ and $[z]_{\bnd{C_{j'}}} \in H(\bnd{C_{j'}})$, we add two representative vertices $w$ and $z$ to $W^*$ and, for each vertex $v$ of $W^*$ we add the edges $wv$ and $zv$ with weights  $H(\bnd{C^*})[w,v]$ and $H(\bnd{C^*})[z,v]$, respectively. Let $\sigma_{uv}$ be the shortest path between any two vertices $u,v$ in this graph. 
       \item If $\sigma_{wz}>s$, we generate a new request $R_{wz}$. 
     \begin{enumerate}
                \item For each pair $u,u' \in \bnd{C^*}$, we compute
    $$\delta_{uu'}= \min\{\sigma_{wu}+\sigma_{zu'}, \sigma_{wu'}+\sigma_{zu}\}.$$
                
    \item Again it follows that $R_{wz}[u,u']=s-\delta_{uu'}$, if $\delta_{uu'} < s-1$, and $R_{wz}[u,u']=\star$, otherwise.
      \item If the value of at least one cell of $R_{wz}$ is different from $\star$,  we add $R_{wz}$ to $Q^l_i(\bnd{C^*})$. Otherwise, we discard the pair of solutions $S^l_j$, $S^{l'}_{j'}$.
     \end{enumerate}
   \end{enumerate}
  
    \end{enumerate}
  
\end{itemize}

Finally, it is readily seen that the edge-counter  of the new solution can be set to be the sum of $k^l_i$ and $k^{l'}_i$. If the resulting edge-counter is larger than $k$, then the solution is discarded.

\medskip The next lemma establishes the correctness of the algorithm.

\begin{lemma}\label{le:correctness}
Graph $G$ admits a solution for \clubedel{s} if and only if the algorithm terminates after visiting the root of $T$.
\end{lemma}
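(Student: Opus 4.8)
The plan is to prove, by bottom-up induction over the nodes of $T$, the following invariant: for every bag $X_i$, the set of solutions computed by the algorithm at $X_i$ coincides, up to keeping among non-distinct solutions only the one with smallest edge-counter, with the set of \emph{realizable} solutions of $X_i$. Here a solution $S^l_i=\langle \bnd{\mathcal{C}^l_i}, \mathcal{D}^l_i, \mathcal{H}^l_i, \mathcal{Q}^l_i, k^l_i \rangle$ is called \emph{realizable} if there is a partition $\mathcal{P}^l_i$ of $G_i$ into potential $s$-clubs such that (i) every complete element of $\mathcal{P}^l_i$ is an $s$-club of $G$; (ii) the non-complete elements of $\mathcal{P}^l_i$, together with the tables $D(\cdot)$, $H(\cdot)$ and request sets $Q(\cdot)$ built from them according to the definitions, yield exactly $\bnd{\mathcal{C}^l_i}$, $\mathcal{D}^l_i$, $\mathcal{H}^l_i$, $\mathcal{Q}^l_i$; and (iii) $k^l_i = |E_i\setminus \mathcal{P}^l_i(E_i)| \le k$. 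Granting the invariant, the lemma follows at once. The root bag is empty, so in any of its partitions every element is complete; hence a realizable solution of the root exists if and only if $G$ admits a partition into $s$-clubs with at most $k$ inter-cluster edges, i.e.\ if and only if $G$ is a yes-instance. Moreover, if $G$ is a yes-instance, then intersecting a global $s$-club partition of $G$ with each $V_i$ yields a realizable solution at every $X_i$, so the algorithm never halts early and reaches the root with a non-empty solution set; conversely, the algorithm reaches the root (without returning a negative answer) only if the root's solution set is non-empty, which by the invariant means a realizable, hence global, solution exists. For the base case, a leaf bag is empty and $G_i$ has no vertices, so the unique trivially realizable solution (all tables empty, $k^1_i=0$) is exactly the one produced by the algorithm.

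For the inductive step I would treat the four bag types in turn, proving both inclusions in each case. For an \textbf{introduce bag} $X_i=X_j\cup\{v\}$, every new partition of $G_i$ is obtained from a partition of $G_j$ by inserting $v$ into one existing potential $s$-club or creating the singleton $\{v\}$, so (using $\intr{C}=\intr{C'}$) realizable solutions of $X_i$ correspond exactly to the extensions of realizable solutions of $X_j$ generated by the algorithm. Correctness of the three-step updates of $D(\cdot)$ and $H(\cdot)$ rests on \Cref{obs:introduce-distance}: any shortest path in $G[C]$ shorter than its counterpart in $G[C']$ must pass through $v$, and since $v$ has no neighbour in $\intr{C}$, its new entries are determined by $N_{G[X_i]}(v)$ alone. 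The request updates are correct because the only change to $G[C]$ is the addition of $v$ and its incident edges: a request is declared fulfilled exactly when some cell now satisfies $D(\bnd{C})[a,b]\le R_{wz}[a,b]$, and otherwise its surviving cells are recomputed from the shortest-path contributions $\delta_{ab}$. The edge-counter is incremented by exactly the edges of $G[X_i]$ joining $v$ to vertices outside its club, which are precisely the new inter-cluster edges of $G_i$. For a \textbf{forget bag} $X_i=X_j\setminus\{v\}$ we have $G_i=G_j$ and all distances inside the clubs are unchanged; the delicate points are that when $\bnd{C'}=\{v\}$ the club becomes complete and, by \Cref{obs:complete}, the completion conditions are exactly the test that $C'$ is a genuine $s$-club; that moving $v$ to the interior may add one equivalence-class row to $H(\cdot)$; and that every boundary pair involving $v$ becomes an interior pair, so the $v$ row and column are dropped from the requests, and if $v$ was at distance larger than $s$ in $G[C']$ from some class $[u]$ a fresh request $R_{uv}$ is created, with entries $R_{uv}[a,b]=s-(\delta_u+\delta_v)$ obtained directly from the definition of a request. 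A solution is discarded exactly when some request ends up with all cells equal to $\star$, which by definition means no $s$-club superset can ever satisfy it, i.e.\ the partition is unextendable; the possible merging of two formerly distinct solutions after removing $v$ is handled by keeping the lower edge-counter.

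The \textbf{join bag} $X_i=X_j=X_{j'}$ is the main obstacle. A realizable solution of $X_i$ comes from a pair of realizable solutions of the children that agree on the boundary partition, with $\intr{C^*}$ the disjoint union of $\intr{C}$ and $\intr{C'}$ and $E_i=E_j\cup E_{j'}$, the two sides sharing only $X_i$ by the tree-decomposition property. The central claim is that, for all $a,b\in\bnd{C^*}$, $d_{G[C^*]}(a,b)$ equals the weighted shortest-path distance in $W^*$: every shortest path in $G[C^*]$ splits into maximal subpaths each contained in $G[C]$ or in $G[C']$ and meeting the other side only at boundary vertices, and these subpaths are exactly the weights $\omega_{ab}$ (and, once interior rows are added as vertices of $W^*$, the $H$-entries); the same argument yields the entries of $H(\bnd{C^*})$. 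The requests are then merged, re-tested for fulfilment against the new $D(\bnd{C^*})$, and updated, and a pair $u\in\intr{C}$, $u'\in\intr{C'}$ with $d_{G[C^*]}(u,u')>s$ forces a brand-new request obtained by adding representatives $w,z$ to $W^*$ and reading off $\sigma_{wz}$ together with the values $\sigma_{wu},\sigma_{zu}$. The hardest part is precisely this request bookkeeping: one must show simultaneously that no request is declared fulfilled unless the corresponding interior pair truly has distance at most $s$ in $G[C^*]$; that every request that should survive does so with the correct residual-length entries; and that a newly generated request records exactly which boundary-pair augmentations of prescribed length $2\le\delta\le s-2$ would bring the pair within distance $s$. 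This requires a careful analysis showing that, since $\bnd{C^*}$ separates the interior from everything outside $G_i$, the effect on a given interior pair of any external portion of a future shortest path is fully summarized by one pair of boundary vertices and the length of the detour between them — which is exactly what a request stores — and this is the point where, for $s>2$, the argument genuinely departs from \cite{DBLP:conf/fct/DondiL19}, as such a path may now weave between $\intr{C}$ and $\intr{C'}$. The remaining checks — that the edge-counter of a joined solution is the sum of the two children's counters, that a realizable solution is never discarded unless truly unextendable, and that all computed objects stay within the size bounds of \Cref{le:number-solutions} — are routine once this invariant is in hand.
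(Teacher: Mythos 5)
Your proposal is correct and follows essentially the same route as the paper's proof: the forward direction reads a valid partition off a root solution via \Cref{obs:complete}, and the converse amounts to showing that the restriction of a global $s$-club partition to each $G_i$ is never discarded, which the paper phrases as a contradiction over the first discarded extendable solution and you phrase as an explicit ``computed $=$ realizable'' invariant proved by induction on the bag types. Both versions defer the detailed verification that the $D(\cdot)$, $H(\cdot)$, $Q(\cdot)$ updates (especially the join-bag request bookkeeping) faithfully track the defined records to the algorithm's description, so your sketch is at the paper's own level of rigor, with the invariant made more explicit.
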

\begin{proof}

Suppose first that the algorithm terminates after visiting the root $\rho$ of $T$. Then the algorithm has computed at least one solution $S_\rho=\langle \emptyset, \emptyset, \emptyset, \emptyset, k_\rho \rangle$ for $X_\rho$. Since $X_\rho$ is an empty bag, by definition of solution there exists a partition $\mathcal{P}_\rho$ of $G$  whose potential $s$-clubs are all complete. Also, each potential $s$-club in $\mathcal{P}_\rho$ satisfied the completion conditions in the corresponding forget bag and hence it is an $s$-club by \Cref{obs:complete}. Moreover, the number of edges of $G$ having their end-vertices in different $s$-clubs of $\mathcal{P}_\rho$ is $k_\rho \le k$. 

Suppose now, to derive a contradiction, that $G$ admits a partition $\cal P$ into $s$-clubs that represents a valid solution but the algorithm terminates prematurely. If this is the case, there must be a bag $X_i$ in which a solution $S_i$ of $X_i$ has been discarded but the partition $\mathcal{P}_i$ corresponding to $S_i$ could be extended to $\mathcal{P}$. We distinguish based on the type of bag $X_i$. 

If $X_i$ is an introduce bag, $S_i$ is discarded only if the edge-counter becomes larger than $k$, which means that there exist at least $k+1$ edges whose end-vertices are in distinct potential $s$-clubs of $\mathcal{P}_i$. Since any two distinct potential $s$-clubs of $\mathcal{P}_i$ will be subsets of two distinct $s$-clubs of $\mathcal{P}$, this is a contradiction with the fact that partition $\cal P$  represents a valid solution. 

If $X_i$ is a forget bag, $S_i$ is discarded either if  it contains a complete potential $s$-club $C$ that does not satisfy the completion conditions or if there is a request $R_{wz}$ in $Q(\bnd{C})$ whose  cells all have value $\star$. In the former case, $C$ is not an $s$-club of $G$ by \Cref{obs:complete}, and there is no potential $s$-club $C'$ of $G$ such that $C \subset C'$, because $\bnd{C}=\emptyset$ and thus vertices in $C$ and vertices in $C' \setminus C$ would form two distinct connected components. In the latter case, any two vertices $u,u'$ such that $u \in [w]_{\bnd{C}}$ and $u' \in [z]_{\bnd{C}}$ are at distance greater than $s$ in $G[C]$ and  there is no pair of vertices $a,b \in \bnd{C}$ such that connecting $a$ and $b$ with a path of length $2 \le \delta \le s-2$ would make the distance between $u$ and $u'$ to be at most $s$. On the other hand, an $s$-club $C'$ of $G$ such that $C \subseteq C'$ would imply the existence of such a path, since $\bnd{C}$ separates $C$ and $C'$. In both cases, $\mathcal{P}$ contains a cluster that is not an $s$-club, a contradiction.

If $X_i$ is a join bag, $S_i$ is discarded  either if  there is a potential $s$-club $C$ whose set $Q(\bnd{C})$ contains a request $R_{wz}$ whose  cells all have value $\star$, or if  the edge-counter becomes larger than $k$. In both cases we can derive a contradiction using the same arguments described in the previous paragraphs.
\end{proof}

\begin{proof}[of \Cref{th:main}]
The correctness of the algorithm derives from \Cref{le:correctness}. It remains to argue about its time complexity. A tree decomposition of $G$ of width $\tw$ can be computed in $O(\tw^{\tw^3} \cdot n)$~\cite{DBLP:journals/siamcomp/Bodlaender96} time, and from it a nice tree-decomposition of width $\tw$  can be derived in $O(\tw \cdot n)$ time~\cite{DBLP:books/sp/Kloks94}.

For each bag $X$, by \Cref{le:number-solutions} and by the fact that we avoid storing duplicated solutions, we have $O(2^{2^{O(\tw^2 \log s)}})$ solutions. Hence, when building the solution set of $X$ from its child or children, we process at most these many elements. The size of a solution $S$ of $X$ is $O(2^{O(\tw^2 \log s)})$, and each extension takes polynomial time in the size of the extended solution. Let $f(\tw,s)=2^{2^{O(\tw^2 \log s)}}$ and $g(\tw,s)=2^{O(\tw^2 \log s)}$,  and observe that $g(\tw,s)^{O(1)} = o(f(\tw,s))$. It follows that constructing the solution set of $X$ takes $O(f(\tw,s) \cdot g(\tw,s)^{O(1)}) = O(f(\tw,s))=O(2^{2^{O(\tw^2 \log s)}})$. In addition, if $X$ is a forget bag, we also need to remove possible duplicated solutions. This can be done by sorting the solutions, which takes $O(g(\tw,s)) \times O(f(\tw,s) \log f(\tw,s))$. Also, since $O(g(\tw,s)) = O(\log f(\tw,s))$, we have $O(f(\tw,s) \log f(\tw,s))= O(f(\tw,s) \cdot g(\tw,s)) =O(f(\tw,s))$. The statement then follows because there are $O(n)$ bags.
\end{proof}

\section{Discussion and Open Problems}
We have proved that the \clubedel{s} problem parameterized by $s+\tw$ (where $\tw$ bounds the treewidth of the input graph) belongs to \FPT. On the other hand, we know that  the problem parameterized by $s$ alone is para\NP-hard. It remains open the complexity of \clubedel{s} parameterized by $\tw$ alone. It is not difficult to see that our approach can be slightly modified to solve a similar variant of \clubedel{s}, namely \clubvdel{s}, in which we seek for $k$ vertices whose removal yields a set of disjoint $s$-clubs. 
Concerning the parameter $k$, it would be interesting to know whether \clubedel{s} is in \FPT~when parameterized by $s+k$. Finally, we observe that it is possible to adjust a reduction in~\cite{DBLP:conf/fct/DondiL19} to prove that \clubedel{s} remains para\NP-hard even when parameterized by $s+d$, where $d$ is the number of clusters in the sought solution. With respect to parameter $d$, we also know that the problem has no subexponential-time parameterized algorithm in $k+d$~\cite{DBLP:journals/jcss/MisraPS20}. Yet,  whether $k+d$ is a tractable parameterization is an interesting question.

\clearpage

\bibliographystyle{splncs04}
\bibliography{bibliography}

\begin{thebibliography}{10}
\providecommand{\url}[1]{\texttt{#1}}
\providecommand{\urlprefix}{URL }
\providecommand{\doi}[1]{https://doi.org/#1}

\bibitem{DBLP:journals/corr/abs-2107-01133}
Abu{-}Khzam, F.N., Makarem, N., Shehab, M.: An improved fixed-parameter
  algorithm for 2-club cluster edge deletion. CoRR  \textbf{abs/2107.01133}
  (2021), \url{https://arxiv.org/abs/2107.01133}

\bibitem{alba}
Alba, R.: A graph‐theoretic definition of a sociometric clique. Journal of
  Mathematical Sociology  \textbf{3},  113--126 (07 1973).
  \doi{10.1080/0022250X.1973.9989826}

\bibitem{DBLP:journals/jco/BalasundaramBT05}
Balasundaram, B., Butenko, S., Trukhanov, S.: Novel approaches for analyzing
  biological networks. J. Comb. Optim.  \textbf{10}(1),  23--39 (2005).
  \doi{10.1007/s10878-005-1857-x}

\bibitem{Balasundaram2013}
Balasundaram, B., Pajouh, F.M.: Graph Theoretic Clique Relaxations and
  Applications, pp. 1559--1598. Springer New York (2013).
  \doi{10.1007/978-1-4419-7997-1\_9}

\bibitem{DBLP:journals/ml/BansalBC04}
Bansal, N., Blum, A., Chawla, S.: Correlation clustering. Mach. Learn.
  \textbf{56}(1-3),  89--113 (2004). \doi{10.1023/B:MACH.0000033116.57574.95}

\bibitem{DBLP:journals/jcb/Ben-DorSY99}
Ben{-}Dor, A., Shamir, R., Yakhini, Z.: Clustering gene expression patterns. J.
  Comput. Biol.  \textbf{6}(3/4),  281--297 (1999).
  \doi{10.1089/106652799318274}

\bibitem{DBLP:journals/jda/Bocker12}
B{\"{o}}cker, S.: A golden ratio parameterized algorithm for cluster editing.
  J. Discrete Algorithms  \textbf{16},  79--89 (2012).
  \doi{10.1016/j.jda.2012.04.005}

\bibitem{DBLP:journals/siamcomp/Bodlaender96}
Bodlaender, H.L.: A linear-time algorithm for finding tree-decompositions of
  small treewidth. {SIAM} J. Comput.  \textbf{25}(6),  1305--1317 (1996).
  \doi{10.1137/S0097539793251219}

\bibitem{DBLP:journals/jal/BodlaenderK96}
Bodlaender, H.L., Kloks, T.: Efficient and constructive algorithms for the
  pathwidth and treewidth of graphs. J. Algorithms  \textbf{21}(2),  358--402
  (1996). \doi{10.1006/jagm.1996.0049}

\bibitem{DBLP:journals/jcss/ChenM12}
Chen, J., Meng, J.: A 2k kernel for the cluster editing problem. J. Comput.
  Syst. Sci.  \textbf{78}(1),  211--220 (2012).
  \doi{10.1016/j.jcss.2011.04.001}

\bibitem{DBLP:conf/fct/DondiL19}
Dondi, R., Lafond, M.: On the tractability of covering a graph with 2-clubs.
  In: Gasieniec, L.A., Jansson, J., Levcopoulos, C. (eds.) {FCT} 2019. LNCS,
  vol. 11651, pp. 243--257. Springer (2019).
  \doi{10.1007/978-3-030-25027-0\_17}

\bibitem{DBLP:series/mcs/DowneyF99}
Downey, R.G., Fellows, M.R.: Parameterized Complexity. Monographs in Computer
  Science, Springer (1999)

\bibitem{DBLP:journals/jcss/FominKPPV14}
Fomin, F.V., Kratsch, S., Pilipczuk, M., Pilipczuk, M., Villanger, Y.: Tight
  bounds for parameterized complexity of cluster editing with a small number of
  clusters. J. Comput. Syst. Sci.  \textbf{80}(7),  1430--1447 (2014).
  \doi{10.1016/j.jcss.2014.04.015}

\bibitem{DBLP:books/sp/Kloks94}
Kloks, T.: Treewidth, Computations and Approximations, LNCS, vol.~842. Springer
  (1994)

\bibitem{DBLP:journals/algorithms/Komusiewicz16}
Komusiewicz, C.: Multivariate algorithmics for finding cohesive subnetworks.
  Algorithms  \textbf{9}(1), ~21 (2016). \doi{10.3390/a9010021}

\bibitem{DBLP:journals/dam/KomusiewiczU12}
Komusiewicz, C., Uhlmann, J.: Cluster editing with locally bounded
  modifications. Discret. Appl. Math.  \textbf{160}(15),  2259--2270 (2012).
  \doi{10.1016/j.dam.2012.05.019}

\bibitem{laan}
Laan, s., Marx, M., Mokken, R.: Close communities in social networks: Boroughs
  and 2-clubs. {SSRN} Electronic Journal  (01 2015). \doi{10.2139/ssrn.2686127}

\bibitem{DBLP:conf/aaim/LiuZZ12}
Liu, H., Zhang, P., Zhu, D.: On editing graphs into 2-club clusters. In:
  {FAW-AAIM}. LNCS, vol.~7285, pp. 235--246. Springer (2012).
  \doi{10.1007/978-3-642-29700-7\_22}

\bibitem{DBLP:journals/jcss/MisraPS20}
Misra, N., Panolan, F., Saurabh, S.: Subexponential algorithm for
  \emph{d}-cluster edge deletion: Exception or rule? J. Comput. Syst. Sci.
  \textbf{113},  150--162 (2020). \doi{10.1016/j.jcss.2020.05.008}

\bibitem{mokken}
Mokken, R.: {Cliques, clubs and clans}. Quality \& Quantity: International
  Journal of Methodology  \textbf{13}(2),  161--173 (1979).
  \doi{10.1007/BF00139635}

\bibitem{mokken2}
Mokken, R., Laan, S.: Close communication and 2-clubs in corporate networks:
  {E}urope 2010. Social Network Analysis and Mining  \textbf{6} (06 2016).
  \doi{10.1007/s13278-016-0345-x}

\bibitem{DBLP:journals/jal/RobertsonS86}
Robertson, N., Seymour, P.D.: Graph minors. {II.} {A}lgorithmic aspects of
  tree-width. J. Algorithms  \textbf{7}(3),  309--322 (1986)

\bibitem{DBLP:journals/csr/Schaeffer07}
Schaeffer, S.E.: Graph clustering. Comput. Sci. Rev.  \textbf{1}(1),  27--64
  (2007). \doi{10.1016/j.cosrev.2007.05.001}

\bibitem{DBLP:journals/dam/ShamirST04}
Shamir, R., Sharan, R., Tsur, D.: Cluster graph modification problems. Discret.
  Appl. Math.  \textbf{144}(1-2),  173--182 (2004).
  \doi{10.1016/j.dam.2004.01.007}

\bibitem{DBLP:journals/pami/WuL93}
Wu, Z., Leahy, R.M.: An optimal graph theoretic approach to data clustering:
  Theory and its application to image segmentation. {IEEE} Trans. Pattern Anal.
  Mach. Intell.  \textbf{15}(11),  1101--1113 (1993). \doi{10.1109/34.244673}

\end{thebibliography}

\end{document}